\documentclass[12pt]{article}
\usepackage{authblk}
\usepackage{setspace} 
\usepackage[dvips]{graphicx} 
\usepackage{amsmath,amsthm,amsfonts}
\usepackage{epsfig}
\usepackage{subfig}
\RequirePackage[colorlinks,citecolor=blue,urlcolor=blue,linkcolor=blue]{hyperref}
\hypersetup{
colorlinks = true,
citecolor=blue,
urlcolor=blue,
linkcolor=blue,
pdfauthor = {Alexey Kuznetsov},
pdfpagemode = UseNone
}
    \oddsidemargin -1.0cm
    \evensidemargin -1.0cm
    \topmargin -1.5cm
    \textwidth 18.7cm
    \textheight 23.5cm
    \def\qed{\hfill$\sqcap\kern-8.0pt\hbox{$\sqcup$}$\\}

    \def\re{\textnormal {Re}}
    \def\im{\textnormal {Im}}
    \def\p{{\mathbb P}}
    \def\e{{\mathbb E}}
    
    \def\r{{\mathbb R}}
    \def\c{{\mathbb C}}

    \def\d{{\textnormal d}}
    \def\i{{\textnormal i}}


\newcommand{\B}{\mathrm{B}}

	\newtheorem{theorem}{Theorem}
	
	\newtheorem{proposition}{Proposition}
	
	\newtheorem{definition}{Definition}

	\newtheorem{remark}{Remark}

	
\title{On ordered beta distribution and the generalized incomplete beta function}

\author{
Mayad Al-Saidi\footnote{ Dept. of Mathematics and Statistics,  York University,
4700 Keele Street, Toronto, ON, M3J 1P3, Canada. Email: mayad.alsaidi@gmail.com} ,
Alexey Kuznetsov\footnote{Dept. of Mathematics and Statistics,  York University,
4700 Keele Street, Toronto, ON, M3J 1P3, Canada.   Email: akuznets@yorku.ca}  \;  and 
Mikhail Nediak\footnote{School of Business, Queen's University, 143 Union St., Kingston, ON, K7L 3N6, Canada. 
Email: \\ mnediak@business.queensu.ca}
 }


\begin{document}


\maketitle

\begin{abstract}
Motivated by applications in Bayesian analysis we introduce a multidimensional beta distribution in an ordered simplex. We study properties of this distribution and connect them with the generalized incomplete beta function. This function is crucial in applications of multidimensional beta distribution, thus we present two efficient numerical algorithms for computing the generalized incomplete beta function, one based on Taylor series expansion and another based on Chebyshev polynomials. 
\end{abstract}

\vspace{0.25cm}

{\vskip 0.15cm}
 \noindent {\it Keywords}:  Beta distribution, beta function, incomplete beta function, ordered simplex, Chebyshev polynomials  \\
 \noindent {\it 2020 Mathematics Subject Classification }: Primary 60E05, Secondary 65D15


\section{Introduction}\label{section_intro}

Before we introduce the main obect of our study -- the ordered beta distribution -- let us consider the following motivating example. Let $S$ be a random variable with binomial distribution with parameters $N$ and $X$, where $N$ is a positive integer and $X$ is a random variable having ${\textnormal{Beta}}(\alpha,\beta)$ distribution. Then it is well-known that the posterior distribution of $X$ given $S=m$ (where $0\le m \le N$) is ${\textnormal{Beta}}(\alpha+m,\beta+N-m)$. The ordered beta distribution arises naturally in the following multidimensional generalization of the above example. Let $\{S_i\}_{1\le i \le n}$ be random variables such that $S_i$ has  binomial $(N_i,X_i)$ distribution, and  $\{X_i\}_{1\le i \le n}$ are random variables such that 
$$
0\le X_1 \le X_2 \le \dots\le X_n\le 1
$$
almost surely. What prior distribution can we assign to the vector
$$
{\mathbf X}=(X_1,X_2,\dots,X_n)
$$
such that the posterior distribution of ${\mathbf X}$ given 
$S_i=m_i$ for $i=1,2,\dots,n$ has the same form as the prior? 

To answer this question, we first define the main object of our investigation. 

\begin{definition}\label{def1}
Let $n\in {\mathbb N}$ and $a_i>0$ and $b_i>0$ for $1\le i \le n$. 
We say that a random vector ${\mathbf X}=(X_1,X_2,\dots,X_n)$ has  ordered beta distribution with parameters 
$\{a_i\}_{1\le i \le n}$ and $\{b_i \}_{1\le i \le n}$ if 
\begin{equation}\label{def_beta_distr_simplex}
\p(X_1 \in \d x_1, X_2 \in \d x_2, \dots, X_n \in \d x_n)=
C^{-1} {\mathbf 1}_{\{0\le x_1 \le x_2 \le \cdots \le x_n \le 1\}} \prod\limits_{i=1}^n x_i^{a_i-1} (1-x_i)^{b_i-1} \d x_i
\end{equation}
where $C$ is the normalization constant. 
\end{definition}

Before we proceed, we would like to explain the notation that will be used everywhere in this paper.  We denote random variables by capital letters (such as $X,Y,Z$), real numbers by lower case letters (such as $x,y,z$) and vectors by bold font (such as ${\mathbf X}$ or ${\mathbf x}$).

Definition \ref{def1} could be stated in the following  equivalent way. Take $n$ independent random variables  $Y_i   \sim {\textrm {Beta}}(a_i,b_i)$ and denote by  
$\Delta_*^n$ the ordered simplex 
\begin{equation}\label{def_simplex}
\Delta_*^n:=\{{\mathbf x}=(x_1,\dots,x_n) \in \r^n \; : \;0\le x_1 \le x_2 \le \dots \le  x_n \le 1\}. 
\end{equation}
Then the ordered beta distribution can be defined as the distribution of the vector ${\mathbf Y}=(Y_1,\dots,Y_n)$ conditioned on the event that it lies in the ordered simplex $\Delta_*^n$: 
\begin{equation}\label{def_beta_distr_simplex2}
\p({\mathbf X} \in \d {\mathbf x} )=
\p({\mathbf Y} \in \d {\mathbf x} | {\mathbf Y} \in \Delta_*^n). 
\end{equation}


The ordered beta distribution was introduced in \cite{LLMN_2015} 
(also subsequently used in \cite{Achtari_2018}) in the study of dynamic pricing and demand learning problems. 
Let us summarize the setup of these problems. Consider a retailer who wants to find a price level at which selling a particular product would bring the highest revenue. To solve this problem the retailer needs to know the demand for this product as a function of price. One strategy to learn the demand is to fix a set of $n$ ordered prices $p_n<p_{n-1}<\dots < p_2< p_1$ and then to estimate the proportion of population that would buy the product at price $p_j$ (for each $j$). Let $X_j\in (0,1)$ be the probability that a randomly sampled customer would buy the product at price $p_j$. The demand is a decreasing function of price, which implies that the probabilities $X_j$ must satisfy  
$0\le X_1 \le X_2 \le \dots \le X_n \le 1$. We observe how many customers made the decision to purchase or not to purchase the product priced at $p_j$, and we denote by $m_j$ the number of successful sales at price $p_j$ while $k_j$ is the number of times the product was offered at price $p_j$ but the customer decided not to purchase it. We assume that the customers make decisions independently of each other, thus the number of customers who decided to purchase the product at price level $p_j$ had binomial distribution with parameters $m_j+k_j$ and $X_j$. The question is how the retailer can use the information contained in vectors 
${\mathbf m}=(m_1,\dots,m_n)$ and ${\mathbf k}=(k_1,\dots,k_n)$
to learn more about the probabilities $X_j$ and to increase the revenue? The answer to this question can be found in \cite{LLMN_2015}, and it is based on the following easily-verified property:   if the prior distribution of purchase probabilities ${\mathbf X}=(X_1,\dots,X_n)$ is ordered beta  with parameters ${\mathbf a}=(a_1,\dots,a_n)$ and ${\mathbf b}=(b_1,\dots,b_n)$, then the posterior distribution of ${\mathbf X}$ given ${\mathbf m}$ and ${\mathbf k}$ is also ordered beta with parameters ${\mathbf a}+{\mathbf m}$ and ${\mathbf b}+{\mathbf k}$. This result can also be found in \cite{Achtari_2018} [Proposition 1, page 13].

The aim of this paper is to study the ordered beta distribution and to state its properties and also to present several numerical algorithms which will make it easier to use this distribution in applications. The paper is organized as follows. In Section \ref{section2} we present various probabilistic properties of ordered beta distribution (such as marginal distributions). To state these properties, we need to introduce a new special function, which we call {\it the generalized incomplete beta function}. The generalized incomplete beta function is indispensable for applications of the ordered beta distribution, in particular it is important to be able to compute this function numerically. In Section \ref{section3} we study analytic properties of the generalized incomplete beta function and in Section \ref{section4} we provide two efficient numerical algorithms for its computation. We demonstrate the efficiency of these algorithms by presenting the results of several numerical experiments in Section \ref{section5}.

\section{Properties of the ordered beta distribution}\label{section2}

The following function will be needed for describing the properties of the ordered beta distribution: 
\begin{definition}\label{def1}
{
Let $n\in {\mathbb N}$ and $a_i>0$, $b_i>0$ for $1\le i \le n$. 
 {\rm The generalized incomplete beta function} (with parameters 
$\{a_i\}_{1\le i \le n}$ and $\{b_i \}_{1\le i \le n}$) is defined for $z\in [0,1]$  as follows:
\begin{equation}\label{def_B_function}
\B\Big( 
\begin{matrix}
a_1, \dots, a_n \\ b_1, \dots, b_n
\end{matrix} \Big  \vert z
\Big ) =\int\limits {\mathbf 1}_{\{0 \le x_1 \le x_2 \le \dots \le x_n \le z\}}  \Big[ \prod\limits_{i=1}^n x_i^{a_i-1} (1-x_i)^{b_i-1} \Big] \d x_1 \dots \d x_n. 
\end{equation}
We will call 
$\B\Big( 
\begin{matrix}
a_1, \dots, a_n \\ b_1, \dots, b_n
\end{matrix} \Big ):=\B\Big( 
\begin{matrix}
a_1, \dots, a_n \\ b_1, \dots, b_n
\end{matrix} \Big  \vert 1
\Big )
$
{\rm the generalized beta function}.} 
\end{definition}

It is clear that the normalization constant in \eqref{def_beta_distr_simplex} must be given by 
$
C=\B\Big( 
\begin{matrix}
a_1, \dots, a_n \\ b_1, \dots, b_n
\end{matrix} \Big ).
$
When $n=1$ we recover the classical incomplete beta function
$$
\B\Big( 
\begin{matrix}
a_1 \\ b_1
\end{matrix} \Big  \vert z
\Big )=\int_0^z x^{a_1-1} (1-x)^{b_1-1} \d x, 
$$
and the classical beta function
$$
\B\Big( 
\begin{matrix}
a_1 \\ b_1
\end{matrix} \Big)=\int_0^1 x^{a_1-1} (1-x)^{b_1-1} \d x=\frac{\Gamma(a_1)\Gamma(b_1)}{\Gamma(a_1+b_1)}. 
$$
The above two functions are usually denoted as 
$\B_z(a,b)$ and $\B(a,b)$, see  \cite{Jeffrey2007}[Section 8.38].

Everywhere in this section we assume that $n\ge 2$ and that 
${\mathbf X}$ has ordered beta distribution with parameters 
$\{a_i\}_{1\le i \le n}$ and $\{b_i \}_{1\le i \le n}$. 
To simplify the presentation of results, in the rest of the paper we follow the convention that 
$\B\Big( 
\begin{matrix}
- \\ -
\end{matrix} \Big | z \Big )=1$ for all $z$. 

In the next result we collect several properties of the ordered beta distriubution. 

\begin{theorem}\label{Thm1}
Let ${\mathbf X}=(X_1,X_2,\dots,X_n)$ have ordered beta distribution with parameters $\{a_i\}_{1\le i \le n}$ and $\{b_i \}_{1\le i \le n}$ and let 
$C=\B\Big( 
\begin{matrix}
a_1, \dots, a_n \\ b_1, \dots, b_n
\end{matrix} \Big )$. The following statements are true: 
\begin{itemize}
\item[(i)] The random vector
 $\hat {\mathbf X}:=(1-X_n,1-X_{n-1},\dots,1-X_1)$ has 
ordered beta distribution with parameters 
$(b_n,b_{n-1},\dots,b_1)$ and $(a_n,a_{n-1},\dots,a_1)$. 
\item[(ii)]
For $1\le k \le n-1$ and $z\in (0,1)$
\begin{equation}\label{Thm1_eqn1}
\p(X_k\le z<X_{k+1})= C^{-1} \B\Big( 
\begin{matrix}
a_1, \dots, a_k \\ b_1, \dots, b_k
\end{matrix} \Big  \vert z
\Big )
\B\Big( 
\begin{matrix}
b_{n}, b_{n-1}, \dots, b_{k+1} \\ a_n, a_{n-1}, \dots, a_{k+1}
\end{matrix} \Big  \vert 1-z
\Big ).
\end{equation}
\item[(iii)] For $1\le k \le n$ and $x\in (0,1)$
\begin{equation}\label{Thm1_eqn2}
\p(X_k \in \d x)=C^{-1} x^{a_k-1} (1-x)^{b_k-1} 
\B\Big( 
\begin{matrix}
a_1, \dots, a_{k-1} \\ b_1, \dots, b_{k-1}
\end{matrix} \Big  \vert x
\Big )
\B\Big( 
\begin{matrix}
b_{n}, b_{n-1}, \dots, b_{k+1} \\ a_n, a_{n-1}, \dots, a_{k+1}
\end{matrix} \Big  \vert 1-x
\Big ) \d x.
\end{equation}
\item[(iv)] For $1\le k \le n$ and $z\in [0,1]$
\begin{align}\label{Thm1_eqn3}
\p(X_k \le z)=
 \sum\limits_{j=k}^{n} C^{-1} \B\Big( 
\begin{matrix}
a_1, \dots, a_j \\ b_1, \dots, b_j
\end{matrix} \Big  \vert z
\Big )
\B\Big( 
\begin{matrix}
b_{n}, b_{n-1}, \dots, b_{j+1} \\ a_n, a_{n-1}, \dots, a_{j+1}
\end{matrix} \Big  \vert 1-z
\Big ),
\end{align}
and 
\begin{align}\label{Thm1_eqn4}
\p(z < X_k)&=
 \sum\limits_{j=0}^{k-1} C^{-1} \B\Big( 
\begin{matrix}
a_1, \dots, a_j \\ b_1, \dots, b_j
\end{matrix} \Big  \vert z
\Big )
\B\Big( 
\begin{matrix}
b_{n}, b_{n-1}, \dots, b_{j+1} \\ a_n, a_{n-1}, \dots, a_{j+1}
\end{matrix} \Big  \vert 1-z
\Big ).
\end{align}
\item[(v)]
If $\alpha>-a_k$ and $\beta>-b_k$ then 
\begin{equation}\label{Thm1_eqn5}
\e[X_k^{\alpha} (1-X_k)^{\beta}]= C^{-1} \B\Big( 
\begin{matrix}
a_1, \dots,a_k+\alpha,\dots a_n \\ b_1, \dots,b_k+\beta,\dots, b_n
\end{matrix} 
\Big ),
\end{equation}
and, more generally, if $\alpha_i>-a_i$ and $\beta_i>-b_i$ for $i=1,2,\dots,n$ then
\begin{equation}\label{Thm1_eqn6}
\e\Big[\prod\limits_{i=1}^n X_i^{\alpha_i} (1-X_k)^{\beta_i}]= C^{-1} \B\Big( 
\begin{matrix}
a_1+\alpha_1, \dots, a_n+\alpha_n \\ b_1+\beta_1,\dots, b_n+\beta_n
\end{matrix} 
\Big ).
\end{equation}

\end{itemize}

\end{theorem}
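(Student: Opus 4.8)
The plan is to establish the five statements more or less in the order given, using the representation of the ordered beta density in \eqref{def_beta_distr_simplex} together with the integral definition \eqref{def_B_function} of the generalized incomplete beta function, and a single recurring idea: split the ordered simplex $\{0\le x_1\le\cdots\le x_n\le z\}$ (or its intersection with a hyperplane $\{x_k=x\}$, or $\{x_k\le z\}$) at one coordinate and factor the integral into a ``left'' block over $x_1,\dots,x_j$ and a ``right'' block over $x_{j+1},\dots,x_n$. For part (i), I would perform the change of variables $x_i\mapsto 1-x_{n+1-i}$ in \eqref{def_beta_distr_simplex}; the Jacobian is $\pm 1$, the constraint $0\le x_1\le\cdots\le x_n\le 1$ maps to the same constraint on the new variables, and each factor $x_i^{a_i-1}(1-x_i)^{b_i-1}\,\d x_i$ becomes $(1-\hat x_{n+1-i})^{a_i-1}\hat x_{n+1-i}^{b_i-1}\,\d\hat x_{n+1-i}$, which is exactly the density of an ordered beta vector with the parameter lists reversed and with $a$'s and $b$'s swapped; in particular this shows $C$ is invariant under that operation, a fact used implicitly throughout.

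For part (ii), on the event $\{X_k\le z<X_{k+1}\}$ the indicator in \eqref{def_beta_distr_simplex} factors as ${\mathbf 1}_{\{0\le x_1\le\cdots\le x_k\le z\}}\cdot{\mathbf 1}_{\{z< x_{k+1}\le\cdots\le x_n\le 1\}}$, so the probability is $C^{-1}$ times a product of two integrals. The first integral is by definition $\B\big(\begin{smallmatrix}a_1,\dots,a_k\\ b_1,\dots,b_k\end{smallmatrix}\,\big\vert\, z\big)$. For the second integral, over $\{z<x_{k+1}\le\cdots\le x_n\le 1\}$, I substitute $x_i=1-y_{n+k+1-i}$ to reverse and reflect it, turning it into an integral of $\prod y^{b-1}(1-y)^{a-1}$ over $\{0\le y_1\le\cdots\le y_{n-k}\le 1-z\}$ with the parameters in the reversed order $b_n,b_{n-1},\dots,b_{k+1}$ over $a_n,\dots,a_{k+1}$, i.e. exactly the claimed factor $\B\big(\begin{smallmatrix}b_n,\dots,b_{k+1}\\ a_n,\dots,a_{k+1}\end{smallmatrix}\,\big\vert\,1-z\big)$; the convention $\B\big(\begin{smallmatrix}-\\-\end{smallmatrix}\big\vert z\big)=1$ handles the boundary cases $k=0,n$. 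Part (iii) is the ``differential'' version of the same computation: fix $X_k=x$, split the simplex into the block $0\le x_1\le\cdots\le x_{k-1}\le x$ and the block $x\le x_{k+1}\le\cdots\le x_n\le1$, pull out the factor $x^{a_k-1}(1-x)^{b_k-1}\,\d x$ coming from the $k$-th coordinate, and recognize the two remaining block integrals as $\B\big(\begin{smallmatrix}a_1,\dots,a_{k-1}\\ b_1,\dots,b_{k-1}\end{smallmatrix}\,\big\vert\,x\big)$ and (after the reflection) $\B\big(\begin{smallmatrix}b_n,\dots,b_{k+1}\\ a_n,\dots,a_{k+1}\end{smallmatrix}\,\big\vert\,1-x\big)$.

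For part (iv), I would deduce both formulas from (ii) by a decomposition-of-unity argument rather than integrating (iii). Note that the events $A_j:=\{X_j\le z<X_{j+1}\}$ for $j=0,1,\dots,n$ (with the conventions $X_0:=0$, $X_{n+1}:=1$, so $A_0=\{z<X_1\}$ and $A_n=\{X_n\le z\}$) are disjoint and their union has full probability, since $X_1\le\cdots\le X_n$. Then $\{X_k\le z\}=\bigcup_{j=k}^{n}A_j$ and $\{z<X_k\}=\bigcup_{j=0}^{k-1}A_j$, so summing the expression in \eqref{Thm1_eqn1} over the appropriate range of $j$ gives \eqref{Thm1_eqn3} and \eqref{Thm1_eqn4}; one only needs to check that the endpoint terms $j=0$ and $j=n$ produced by \eqref{Thm1_eqn1} are exactly $\p(z<X_1)$ and $\p(X_n\le z)$, which follows from the same simplex-splitting with one empty block and the convention on $\B\big(\begin{smallmatrix}-\\-\end{smallmatrix}\big\vert\cdot\big)$. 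Finally, part (v): \eqref{Thm1_eqn6} is immediate, since by \eqref{def_beta_distr_simplex} the expectation $\e[\prod_i X_i^{\alpha_i}(1-X_i)^{\beta_i}]$ is $C^{-1}$ times $\int{\mathbf 1}_{\Delta_*^n}\prod_i x_i^{a_i+\alpha_i-1}(1-x_i)^{b_i+\beta_i-1}\,\d x_i$, which is exactly $\B\big(\begin{smallmatrix}a_1+\alpha_1,\dots,a_n+\alpha_n\\ b_1+\beta_1,\dots,b_n+\beta_n\end{smallmatrix}\big)$; the convergence conditions $\alpha_i>-a_i$, $\beta_i>-b_i$ guarantee the integrand is integrable near the faces $x_i=0$ and $x_i=1$. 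Then \eqref{Thm1_eqn5} is the special case $\alpha_k=\alpha$, $\beta_k=\beta$, all other exponents $0$.

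I do not expect a serious obstacle here; the only points requiring care are bookkeeping ones. The main thing to get right is the reflection substitution that converts a ``tail'' block $\{z<x_{k+1}\le\cdots\le x_n\le 1\}$ into a ``head'' block for the reversed-and-swapped parameters — in particular tracking that the parameter order is genuinely reversed to $b_n,b_{n-1},\dots,b_{k+1}$ over $a_n,a_{n-1},\dots,a_{k+1}$, and that the upper limit becomes $1-z$. The second point is consistent handling of the degenerate cases $k\in\{0,n\}$ (empty left or right block) via the stated convention $\B\big(\begin{smallmatrix}-\\-\end{smallmatrix}\big\vert z\big)=1$, which is precisely what makes the endpoint terms in the sums \eqref{Thm1_eqn3}–\eqref{Thm1_eqn4} come out correctly. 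Everything else is a direct application of Tonelli's theorem to factor the iterated integral over a product of the two coordinate blocks.
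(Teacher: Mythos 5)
Your proposal is correct and follows essentially the same route as the paper: the reflection $x_i\mapsto 1-y_i$ to convert tail blocks of the ordered simplex into the reversed-and-swapped $\B$-function, Tonelli factorization of the simplex at the $k$-th coordinate for (ii) and (iii), the disjoint-union decomposition over the events $\{X_j\le z<X_{j+1}\}$ for (iv), and direct substitution into \eqref{def_beta_distr_simplex} for (v). The only difference is that you spell out the endpoint conventions and the index bookkeeping in more detail than the paper does.
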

\begin{proof}
The result in item (i) follows directly from Definition 
\ref{def1}. To prove the result in item (ii) we first note the following identity: for $z \in (0,1)$ 
\begin{equation}\label{1_minus_z_symmetry}
\int_{[0,1]^n} {\mathbf 1}_{\{z\le x_1 \le x_2 \le \cdots \le x_n \le 1\}} \prod\limits_{i=1}^n x_i^{a_i-1} (1-x_i)^{b_i-1} \d x_i
=\B\Big( 
\begin{matrix}
b_n, b_{n-1} \dots, b_1 \\ a_{n}, a_{n-1}, \dots, a_1
\end{matrix} \Big  \vert 1-z
\Big ).
\end{equation}
Formula \eqref{1_minus_z_symmetry} can be derived by applying a change of variables $x_i \mapsto 1-y_i$ and using \eqref{def_B_function}. 
Now we compute with the help of 
\eqref{def_B_function}
and \eqref{1_minus_z_symmetry} 
\begin{align*}
\p(X_k \le z<X_{k+1})&= 
\int_{[0,1]^n} C^{-1} {\mathbf 1}_{\{0\le x_1 \le x_2 \le \cdots \le x_n \le 1\}} 
{\mathbf 1}_{\{x_k<z<x_{k+1}\}} \prod\limits_{i=1}^n x_i^{a_i-1} (1-x_i)^{b_i-1} \d x_i\\
&=
C^{-1}\int_{[0,1]^k} {\mathbf 1}_{\{0\le x_1 \le x_2 \le \cdots \le x_k < z\}} 
\prod\limits_{i=1}^k x_i^{a_i-1} (1-x_i)^{b_i-1} \d x_i\\
& \;\;\;\;\;\; \times 
\int_{[0,1]^{n-k}} {\mathbf 1}_{\{z < x_{k+1} \le \cdots \le x_n \le 1\}} 
\prod\limits_{i=k+1}^n x_i^{a_i-1} (1-x_i)^{b_i-1} \d x_i \\
&=
C^{-1} \B\Big( 
\begin{matrix}
a_1, \dots, a_k \\ b_1, \dots, b_k
\end{matrix} \Big  \vert z
\Big )
\B\Big( 
\begin{matrix}
b_{n}, b_{n-1}, \dots, b_{k+1} \\ a_n, a_{n-1}, \dots, a_{k+1}
\end{matrix} \Big  \vert 1-z
\Big ). 
\end{align*}
The derivation of the marginal distribution of $X_k$ given in \eqref{Thm1_eqn2} follows in the same way from  \eqref{def_beta_distr_simplex} by integrating over variables all $x_i$ with $i\neq k$ and using \eqref{def_B_function}
and \eqref{1_minus_z_symmetry} . 

Formula \eqref{Thm1_eqn3} follows from \eqref{Thm1_eqn1} by noting that for an ordered beta distributed vector ${\mathbf X}$ the event $\{X_k \le z\}$ is the disjoint union of events 
$\{X_j \le z<X_{j+1}\}$, $k\le j \le n$. An identical argument is used to prove \eqref{Thm1_eqn4}. 

Formulas \eqref{Thm1_eqn5} and \eqref{Thm1_eqn6} follow directly from \eqref{def_beta_distr_simplex} and \eqref{def_B_function}.
\end{proof}

\section{Properties of the generalized incomplete Beta function}\label{section3}

In the next proposition we collect several properties of the generalized incomplete beta function, which will be useful later. 

\begin{proposition}\label{Proposition1}
Assume that $z\in (0,1)$ and $a_i>0$, $b_i>0$ for $1\le i \le n$. Then the following statements are true:
\begin{itemize}
\item[(i)] 
\begin{equation}\label{B_symmetry}
\B\Big( 
\begin{matrix}
a_1, a_2, \dots, a_n \\ b_1, b_2, \dots, b_n
\end{matrix} \Big )=\B\Big( 
\begin{matrix}
 b_n, b_{n-1}, \dots, b_1 \\
 a_n, a_{n-1}, \dots, a_1 
\end{matrix} \Big ), 
\end{equation}
\item[(ii)]
\begin{align}\label{B_sum_identity}
\sum\limits_{k=0}^{n} 
 \B\Big( 
\begin{matrix}
a_1, \dots, a_k \\ b_1, \dots, b_k
\end{matrix} \Big  \vert z
\Big )
\B\Big( 
\begin{matrix}
b_{n}, b_{n-1}, \dots, b_{k+1} \\ a_n, a_{n-1}, \dots, a_{k+1}
\end{matrix} \Big  \vert 1-z
\Big )=\B\Big( 
\begin{matrix}
a_1, \dots, a_n \\ b_1, \dots, b_n
\end{matrix} \Big  ),
\end{align}
\item[(iii)]
\begin{align}\label{B_sum_identity2}
\sum\limits_{k=0}^{n} (-1)^k
 \B\Big( 
\begin{matrix}
a_1, \dots, a_k \\ b_1, \dots, b_k
\end{matrix} \Big  \vert z
\Big )
\B\Big( 
\begin{matrix}
a_{n}, a_{n-1}, \dots, a_{k+1} \\ b_n, b_{n-1}, \dots, b_{k+1}
\end{matrix} \Big  \vert z
\Big )=0,
\end{align}
\item[(iv)]
\begin{equation}\label{B_integral_identity}
\B\Big( 
\begin{matrix}
a_1, a_2, \dots, a_n \\ b_1, b_2, \dots, b_n
\end{matrix} \Big  \vert z
\Big )=\int_0^z x^{a_n-1} (1-x)^{b_n-1} 
\B\Big( 
\begin{matrix}
a_1, a_2, \dots, a_{n-1} \\ b_1, b_2, \dots, b_{n-1}
\end{matrix} \Big  \vert x
\Big ) \d x,
\end{equation}

\item[(v)] 
for $1\le k \le n$
\begin{equation}\label{B_integral_identity2}
\B\Big( 
\begin{matrix}
a_1, a_2, \dots, a_n \\ b_1, b_2, \dots, b_n
\end{matrix} \Big )=\int_0^1 x^{a_k-1} (1-x)^{b_k-1} 
\B\Big( 
\begin{matrix}
a_1, a_2, \dots, a_{k-1} \\ b_1, b_2, \dots, b_{k-1}
\end{matrix} \Big | x \Big )
\B\Big( 
\begin{matrix}
b_n, b_{n-1}, \dots, b_{k+1} \\ a_n, a_{n-1}, \dots, a_{k+1}
\end{matrix} \Big | 1-x \Big ) \d x.
\end{equation}

\end{itemize}
\end{proposition}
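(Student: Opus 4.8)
The plan is to prove each of the five identities by reducing it to the integral definition \eqref{def_B_function} together with two elementary tools: the change of variables $x_i \mapsto 1-x_{n+1-i}$ (which reverses the ordering of the simplex and swaps the roles of the $a_i$ and $b_i$), and the observation that splitting the ordered simplex $\{0\le x_1\le\cdots\le x_n\le z\}$ at a threshold factorizes the integral into a product of lower-dimensional generalized incomplete beta integrals. Both of these already appear in the proof of Theorem \ref{Thm1} (see \eqref{1_minus_z_symmetry} and the computation of $\p(X_k\le z<X_{k+1})$), so the work here is mostly bookkeeping.

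For item (i), I would apply the substitution $x_i = 1-y_{n+1-i}$ for $i=1,\dots,n$ in \eqref{def_B_function} with $z=1$; the Jacobian is $\pm 1$, the constraint $0\le x_1\le\cdots\le x_n\le 1$ becomes $0\le y_1\le\cdots\le y_n\le 1$, and each factor $x_i^{a_i-1}(1-x_i)^{b_i-1}$ becomes $y_{n+1-i}^{b_i-1}(1-y_{n+1-i})^{a_i-1}$, which upon reindexing gives exactly the right-hand side. For item (ii), I would observe that, since $z\in(0,1)$, the event (in integral form) $\{0\le x_1\le\cdots\le x_n\le 1\}$ is, up to a null set, the disjoint union over $k=0,1,\dots,n$ of the sets where $x_k\le z<x_{k+1}$ (with the conventions $x_0=0$, $x_{n+1}=1$). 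On each such piece the integral splits as the product of $\int{\mathbf 1}_{\{0\le x_1\le\cdots\le x_k<z\}}\prod_{i=1}^k x_i^{a_i-1}(1-x_i)^{b_i-1}\,\d x_i$ and $\int{\mathbf 1}_{\{z<x_{k+1}\le\cdots\le x_n\le 1\}}\prod_{i=k+1}^n x_i^{a_i-1}(1-x_i)^{b_i-1}\,\d x_i$; the first factor is $\B\big(\begin{smallmatrix}a_1,\dots,a_k\\ b_1,\dots,b_k\end{smallmatrix}\big|z\big)$ and the second, by \eqref{1_minus_z_symmetry}, is $\B\big(\begin{smallmatrix}b_n,\dots,b_{k+1}\\ a_n,\dots,a_{k+1}\end{smallmatrix}\big|1-z\big)$. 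Summing over $k$ and using that the total integral is $\B\big(\begin{smallmatrix}a_1,\dots,a_n\\ b_1,\dots,b_n\end{smallmatrix}\big)$ gives \eqref{B_sum_identity}. (Equivalently, \eqref{B_sum_identity} is just the statement that the probabilities in \eqref{Thm1_eqn1}, extended to $k=0$ and $k=n$, sum to one.) Item (iv) is immediate: integrating out $x_n$ last in \eqref{def_B_function}, the inner integral over $x_1\le\cdots\le x_{n-1}\le x_n$ is precisely $\B\big(\begin{smallmatrix}a_1,\dots,a_{n-1}\\ b_1,\dots,b_{n-1}\end{smallmatrix}\big|x_n\big)$, leaving $\int_0^z x_n^{a_n-1}(1-x_n)^{b_n-1}\B(\cdots|x_n)\,\d x_n$. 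Item (v) follows from the same splitting argument as (ii) but applied at the single coordinate $x_k$ rather than summing over a variable threshold: fix the value of $x_k=x$, integrate the coordinates below it to get $\B\big(\begin{smallmatrix}a_1,\dots,a_{k-1}\\ b_1,\dots,b_{k-1}\end{smallmatrix}\big|x\big)$, integrate the coordinates above it and apply \eqref{1_minus_z_symmetry} to get $\B\big(\begin{smallmatrix}b_n,\dots,b_{k+1}\\ a_n,\dots,a_{k+1}\end{smallmatrix}\big|1-x\big)$, then integrate over $x\in(0,1)$ with the weight $x^{a_k-1}(1-x)^{b_k-1}$.

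The one genuinely different identity is item (iii), \eqref{B_sum_identity2}, which I expect to be the main obstacle since both factors involve the \emph{same} upper limit $z$ and the signs $(-1)^k$ rule out a direct probabilistic interpretation. The cleanest route I see is induction on $n$ combined with the recursion \eqref{B_integral_identity}: write $F_n(z)$ for the left-hand side of \eqref{B_sum_identity2}, check the base case $n=1$ directly (the two terms $k=0$ and $k=1$ are $-\B\big(\begin{smallmatrix}a_1\\ b_1\end{smallmatrix}\big|z\big)$ and $+\B\big(\begin{smallmatrix}a_1\\ b_1\end{smallmatrix}\big|z\big)$, with our convention that the empty $\B$ equals $1$), and then show $F_n\equiv 0$ by differentiating in $z$ and using \eqref{B_integral_identity} to express $\frac{\d}{\d z}\B\big(\begin{smallmatrix}a_1,\dots,a_j\\ b_1,\dots,b_j\end{smallmatrix}\big|z\big)=z^{a_j-1}(1-z)^{b_j-1}\B\big(\begin{smallmatrix}a_1,\dots,a_{j-1}\\ b_1,\dots,b_{j-1}\end{smallmatrix}\big|z\big)$ — after applying the product rule to each term and shifting the summation index, the derivative $F_n'(z)$ should collapse to a multiple of $F_{n-1}(z)=0$, while $F_n(0)=0$ handles the constant of integration. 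An alternative, perhaps slicker, derivation is to recognize \eqref{B_sum_identity2} as the $n$-dimensional analogue of the telescoping identity obtained by repeatedly applying the one-variable relation $\int_0^z = \int_0^1 - \int_z^1$ inside the iterated integral, though I would expect the induction to be the safer write-up.
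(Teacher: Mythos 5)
Your proposal is correct, and for items (i), (ii), (iv) and (v) it follows essentially the same route as the paper: (i) is the reversal-and-reflection change of variables, (ii) is the decomposition of the simplex at the threshold $z$ (the paper phrases this as $\p(X_k\le z)+\p(z<X_k)=1$ via \eqref{Thm1_eqn3}--\eqref{Thm1_eqn4}, which is the identity you note in your parenthetical), (iv) is integrating out $x_n$ last, and (v) is the splitting at the coordinate $x_k$ (the paper cites it as a corollary of the marginal density \eqref{Thm1_eqn2}). The genuine divergence is item (iii). The paper multiplies the inclusion--exclusion identity \eqref{indicator_identity} for indicators of orderings (itself proved by induction on $n$ from ${\mathbf 1}_{\{x_1\le x_2\}}+{\mathbf 1}_{\{x_2<x_1\}}=1$) by the weight $\prod_i x_i^{a_i-1}(1-x_i)^{b_i-1}$ and integrates over $[0,z]^n$; you instead differentiate in $z$ and use the recursion \eqref{B_integral_identity}. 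Your route works, and in fact works more simply than you predict: writing $B_k(z)$ for the first factor and $\tilde B_{k+1,n}(z)$ for the second, the product rule gives $\frac{\d}{\d z}B_k(z)=z^{a_k-1}(1-z)^{b_k-1}B_{k-1}(z)$ and $\frac{\d}{\d z}\tilde B_{k+1,n}(z)=z^{a_{k+1}-1}(1-z)^{b_{k+1}-1}\tilde B_{k+2,n}(z)$ (the last-listed parameter pair of the reversed block is $(a_{k+1},b_{k+1})$), and after shifting the index in the second sum the two sums cancel term by term, so $F_n'\equiv 0$ outright --- no appeal to $F_{n-1}\equiv 0$ and hence no induction is needed; $F_n(0)=0$ finishes it. (Two trivial slips: the signs of the $k=0$ and $k=1$ terms in your $n=1$ base case are interchanged, and the differentiation step should be justified by the fundamental theorem of calculus, which applies since the integrand in \eqref{B_integral_identity} is continuous on $(0,1)$.) The trade-off: the paper's combinatorial identity exposes the inclusion--exclusion structure behind \eqref{B_sum_identity2} and needs no regularity, while your argument is shorter to verify and avoids having to state and prove \eqref{indicator_identity} separately.
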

\begin{proof}
Formula \eqref{B_symmetry} follows from \eqref{def_B_function} by setting $z=1$ and changing the variables of integration $x_i \mapsto 1-y_i$. The identity \eqref{B_sum_identity} is obtained from 
\eqref{Thm1_eqn3} and \eqref{Thm1_eqn4} by noting that $\p(X_k \le z) + \p(z<X_k)=1$. 

To prove formula \eqref{B_sum_identity2} one would need to multiply both sides of the following identity
\begin{align}\label{indicator_identity}
&{\mathbf 1}_{\{  x_1 \le x_2 \le  \cdots \le x_n \}}
- {\mathbf 1}_{\{ x_1 \le x_2 \le  \cdots \le x_{n-1}   \}}
+{\mathbf 1}_{\{ x_1 \le x_2 \le  \cdots \le x_{n-2}  \}}{\mathbf 1}_{\{ x_n < x_{n-1} \}} \\
\nonumber
&-{\mathbf 1}_{\{ x_1 \le x_2 \le  \cdots \le x_{n-3}  \}}{\mathbf 1}_{\{ x_n < x_{n-1} < x_{n-2} \}}
+\dots+(-1)^n {\mathbf 1}_{\{x_n < x_{n-1}  < \cdots < x_{1} \}}=0, 
\end{align}
by 
$$
\prod\limits_{i=1}^n x_i^{a_i-1} (1-x_i)^{b_i-1} \d x_i
$$
and integrate over $[0,z]^n$. The identity \eqref{indicator_identity} can be established easily by induction; the case $n=2$ is equivalent to an obvious statement
$$
{\mathbf 1}_{\{x_1 \le x_2\}}+{\mathbf 1}_{\{x_2<x_1\}}=1. 
$$

Formula \eqref{B_integral_identity} follows from the definition \eqref{def_B_function} and formula 
\eqref{B_integral_identity2} is a corollary of \eqref{Thm1_eqn2}. 
\end{proof}

For $a_i>0$, $b_i>0$ and $z\in (0,1]$ we define
\begin{equation}\label{def_ beta}
\beta\Big( 
\begin{matrix}
a_1, \dots, a_n \\ b_1, \dots, b_n
\end{matrix} \Big  \vert z
\Big ):=
z^{-a_1-a_2-\dots-a_n} \B\Big( 
\begin{matrix}
a_1, \dots, a_n \\ b_1, \dots, b_n
\end{matrix} \Big  \vert z
\Big ), 
\end{equation}
This function has better analytical properties, compared with the generalized incomplete Beta function.

\begin{proposition}\label{prop_beta_analytic}
The function 
$z\mapsto \beta\Big( 
\begin{matrix}
a_1, \dots, a_n \\ b_1, \dots, b_n
\end{matrix} \Big  \vert z
\Big )$  is analytic in ${\mathbb C} \setminus [1,\infty)$ and satisfies 
\begin{equation}\label{beta_integral_identity}
\beta\Big( 
\begin{matrix}
a_1, a_2, \dots, a_n \\ b_1, b_2, \dots, b_n
\end{matrix} \Big  \vert z
\Big )=\int_0^1 y^{a_1+a_2+\dots+a_n-1} (1-yz)^{b_n-1} 
\beta\Big( 
\begin{matrix}
a_1, a_2, \dots, a_{n-1} \\ b_1, b_2, \dots, b_{n-1}
\end{matrix} \Big  \vert yz
\Big ) \d y.
\end{equation}
If $b_i \in {\mathbb N}$ for all $1\le i \le n$, then   
$
\beta\Big( 
\begin{matrix}
a_1, \dots, a_n \\ b_1, \dots, b_n
\end{matrix} \Big  \vert z
\Big )$ is a polynomial in $z$. 
\end{proposition}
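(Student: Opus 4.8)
The plan is to prove all three assertions by induction on $n$, using the integral identity \eqref{beta_integral_identity} as the recursion, so I would derive that identity first. Starting from \eqref{B_integral_identity} and substituting $x=yz$ (so $\dd x = z\,\dd y$ and $x\in[0,z]$ becomes $y\in[0,1]$) gives
\[
\B\Big(\begin{matrix}a_1,\dots,a_n\\b_1,\dots,b_n\end{matrix}\Big\vert z\Big)
= z\int_0^1 (yz)^{a_n-1}(1-yz)^{b_n-1}\,\B\Big(\begin{matrix}a_1,\dots,a_{n-1}\\b_1,\dots,b_{n-1}\end{matrix}\Big\vert yz\Big)\,\dd y .
\]
Writing $A_k:=a_1+\cdots+a_k$ and inserting the relations $\B(\cdots\vert z)=z^{A_n}\beta(\cdots\vert z)$ and $\B(\cdots\vert yz)=(yz)^{A_{n-1}}\beta(\cdots\vert yz)$ on the two sides, the powers of $z$ and of $y$ combine because $a_n-1+A_{n-1}=A_n-1$, and after cancelling $z^{A_n}$ one obtains exactly \eqref{beta_integral_identity}, valid for $z\in(0,1]$. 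For $n=1$ this reads $\beta\big(\begin{smallmatrix}a_1\\b_1\end{smallmatrix}\big\vert z\big)=\int_0^1 y^{a_1-1}(1-yz)^{b_1-1}\,\dd y$, and the $n=0$ case is the constant $1$ by the stated convention.

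Next I would prove analyticity on $\c\setminus[1,\infty)$ by induction on $n$. The key geometric observation, used repeatedly, is that if $z\in\c\setminus[1,\infty)$ then for every $y\in[0,1]$ one has $yz\notin[1,\infty)$ and $1-yz\notin(-\infty,0]$; thus the principal branch of $(1-yz)^{b_n-1}$ is analytic in $z$ and, by the induction hypothesis, $\beta\big(\begin{smallmatrix}a_1,\dots,a_{n-1}\\b_1,\dots,b_{n-1}\end{smallmatrix}\big\vert yz\big)$ is well-defined and analytic in $z$. The exponents $A_n-1>-1$ and $b_n-1>-1$ make the integrand $y$-integrable near $y=0$ and $y=1$, and on a small disc about any $z_0\notin[1,\infty)$ there is a $y$-integrable majorant independent of $z$; hence, by Morera's theorem combined with Fubini (or by differentiation under the integral sign), the right-hand side of \eqref{beta_integral_identity} is analytic on $\c\setminus[1,\infty)$. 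Since it agrees with $\beta(\cdots\vert z)$ on $(0,1]$, it is the asserted analytic continuation, and \eqref{beta_integral_identity} then holds throughout $\c\setminus[1,\infty)$ by the identity theorem. The base cases $n=0,1$ are immediate from the explicit formulas above.

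For the polynomial statement, assume $b_i\in{\mathbb N}$ for all $i$, so each $b_i-1$ is a nonnegative integer, and argue again by induction. The case $n=0$ is the constant $1$. If $\beta\big(\begin{smallmatrix}a_1,\dots,a_{n-1}\\b_1,\dots,b_{n-1}\end{smallmatrix}\big\vert w\big)=\sum_j c_j w^j$ is a polynomial, substitute this together with the finite binomial expansion $(1-yz)^{b_n-1}=\sum_{l=0}^{b_n-1}\binom{b_n-1}{l}(-1)^l(yz)^l$ into \eqref{beta_integral_identity}; each resulting term contributes $\int_0^1 y^{A_n-1+j+l}\,\dd y=1/(A_n+j+l)$, finite since $A_n>0$, so the integral is a finite linear combination of powers $z^{j+l}$, i.e. a polynomial in $z$. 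As it equals $\beta(\cdots\vert z)$ on $(0,1]$, this polynomial is the analytic continuation obtained above.

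The step I expect to demand the most care is the analyticity of the parameter integral in the induction step: one must verify that moving off the ray $[1,\infty)$ in $z$ keeps $yz$ off that same ray and keeps $1-yz$ off the cut $(-\infty,0]$ for all $y\in[0,1]$ (so that neither the branch cut of $(1-yz)^{b_n-1}$ nor the singular set of the inductive $\beta$ is met), and to exhibit the local uniform integrable bound that legitimizes passing analyticity inside the integral. The remaining work — the substitution $x=yz$ and the manipulation of finite sums — is routine bookkeeping.
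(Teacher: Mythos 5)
Your proposal is correct and follows essentially the same route as the paper: derive \eqref{beta_integral_identity} from \eqref{B_integral_identity} via the substitution $x=yz$, then establish both the analyticity in ${\mathbb C}\setminus[1,\infty)$ and the polynomial property by induction on $n$ using that identity. The extra details you supply (checking that $yz\notin[1,\infty)$ and $1-yz\notin(-\infty,0]$ for $y\in[0,1]$, and the Morera/majorant justification for analyticity of the parameter integral) are exactly the steps the paper leaves implicit, and they are all sound.
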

\begin{proof}
For $z\in (0,1)$ formula \eqref{beta_integral_identity} follows from \eqref{B_integral_identity} by a change of variables 
$x=zy$. The fact that $\beta\Big( 
\begin{matrix}
a_1, \dots, a_n \\ b_1, \dots, b_n
\end{matrix} \Big  \vert z
\Big )$ is analytic in ${\mathbb C} \setminus [1,\infty)$ follows from \eqref{beta_integral_identity} by induction on $n$ (note that for every $y\in (0,1)$ the function $z\mapsto (1-yz)^{b_n-1}$ is analytic in ${\mathbb C} \setminus [1,\infty)$). 
The fact that $\beta$-function is a polynomial in $z$ when the coefficients $b_i$ are integers follows from 
\eqref{beta_integral_identity} by induction in $n$. 
\end{proof}

It is clear from Proposition \ref{prop_beta_analytic} that the generalized incomplete beta function $\B\Big( 
\begin{matrix}
a_1, \dots, a_n \\ b_1, \dots, b_n
\end{matrix} \Big  \vert z
\Big )$ is a polynomial in $z$ if  $a_1+a_2+\dots+a_n$ and all coefficients $b_i$ are positive integers and that in the general case it is an analytic function in the complex plane with two cuts along $(-\infty,0]$ and $[1,\infty)$.

\section{Computing the generalized incomplete Beta function}\label{section4}

Now we turn our attention to the question of computing the generalized incomplete beta function for arbitrary $z\in (0,1]$. 
Our first observation is that it is enough to be able to compute this function for $z\in (0,1/2]$. To show this, we rewrite the identity \eqref{B_sum_identity} in the form  
\begin{align}\label{B_sum_identity_new}
\B\Big( 
\begin{matrix}
a_1, \dots, a_n \\ b_1, \dots, b_n
\end{matrix} \Big  )&=
\B\Big( 
\begin{matrix}
a_1, \dots, a_n \\ b_1, \dots, b_n
\end{matrix} \Big  \vert z
\Big )+
\B\Big( 
\begin{matrix}
b_{n}, b_{n-1}, \dots, b_{1} \\ a_n, a_{n-1}, \dots, a_{1}
\end{matrix} \Big  \vert 1-z
\Big )\\ \nonumber
&+
\sum\limits_{k=1}^{n-1} 
 \B\Big( 
\begin{matrix}
a_1, \dots, a_k \\ b_1, \dots, b_k
\end{matrix} \Big  \vert z
\Big )
\B\Big( 
\begin{matrix}
b_{n}, b_{n-1}, \dots, b_{k+1} \\ a_n, a_{n-1}, \dots, a_{k+1}
\end{matrix} \Big  \vert 1-z
\Big ). 
\end{align} 
We see that we can compute the value of 
$\B\Big( 
\begin{matrix}
a_1, \dots, a_n \\ b_1, \dots, b_n
\end{matrix} \Big  )$
by setting $z=1/2$ in \eqref{B_sum_identity_new} and when $z\in (1/2,1)$  we can compute 
$\B\Big( 
\begin{matrix}
a_1, \dots, a_n \\ b_1, \dots, b_n
\end{matrix} \Big  \vert z
\Big )$ by induction, since all other terms in 
\eqref{B_sum_identity_new} either depend on $1-z \in (0,1/2)$ or have fewer parameters. Thus, one can use identity \eqref{B_sum_identity_new} coupled with induction on $n$ to compute the value of any generalized incomplete beta function (with arbitrary number of parameters) for any value of $z\in [0,1]$. 

Instead of computing the  generalized incomplete beta function
$B\Big( 
\begin{matrix}
a_1, \dots, a_n \\ b_1, \dots, b_n
\end{matrix} \Big  \vert z
\Big )$, it is more convenient to solve an equivalent problem of computing
$\beta\Big( 
\begin{matrix}
a_1, \dots, a_n \\ b_1, \dots, b_n
\end{matrix} \Big  \vert z
\Big )$, as this function is analytic in a wider domain. Next we present two algorithms for computing this function:  the first algorithm is based on Taylor series expansion and the second algorithm is based on Chebyshev series.

\subsection{The Taylor expansion method}\label{subsection_computing_Taylor}

The computation proceeds by iteration of 
\eqref{B_integral_identity}. 
First of all, we note that when $n=1$ we have a series representation 
\begin{equation}\label{incomplete_beta_Taylor}
\beta\Big( 
\begin{matrix}
a_1 \\ b_1
\end{matrix} \Big  \vert z
\Big )=\int_0^1 y^{a_1-1} (1-yz)^{b_1-1} \d y= \sum\limits_{k\ge 0}\frac{(1-b_1)_k}{(a_1+k) k!} z^{k}, \;\;\; |z|<1, 
\end{equation}
which is easily obtained by integrating the binomial series expansion of the term $(1-yz)^{b_1}$. Here $(a)_k=a(a+1)\dots(a+k-1)$ denotes the Pochhammer symbol. 
Computing the coefficients of Taylor series of $\beta\Big( 
\begin{matrix}
a_1, \dots, a_n \\ b_1, \dots, b_n
\end{matrix} \Big  \vert z
\Big )$ when $n\ge 2$ can be done iteratively with the help of the following result. 

\begin{proposition}
For $n\ge 1$ and $k\ge 0$ denote by  $c_k^{(n)}$ the coefficients in the Taylor  series expansions 
\begin{equation}\label{beta_Taylor_series}
\beta\Big( 
\begin{matrix}
a_1, \dots, a_n \\ b_1, \dots, b_n
\end{matrix} \Big  \vert z
\Big )=
 \sum\limits_{k\ge 0}  c^{(n)}_k z^{k}, \;\;\; |z|<1.
\end{equation}
Then for $n\ge 2$ and $k\ge 0$
\begin{equation}\label{c_k_n_recursive_formula}
c^{(n)}_k=\frac{1}{k+a_1+\dots+a_n} \sum\limits_{l=0}^k c^{(n-1)}_{k-l} \times \frac{(1-b_n)_l}{l!} . 
\end{equation}
\end{proposition}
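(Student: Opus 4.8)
The plan is to derive the recursion \eqref{c_k_n_recursive_formula} directly from the integral identity \eqref{beta_integral_identity} in Proposition \ref{prop_beta_analytic} by matching Taylor coefficients. First I would substitute the known Taylor expansion \eqref{beta_Taylor_series} for $\beta\Big(\begin{matrix}a_1,\dots,a_{n-1}\\b_1,\dots,b_{n-1}\end{matrix}\Big|yz\Big)=\sum_{j\ge 0}c_j^{(n-1)}(yz)^j$ into the right-hand side of \eqref{beta_integral_identity}, together with the binomial series $(1-yz)^{b_n-1}=\sum_{m\ge 0}\frac{(1-b_n)_m}{m!}(yz)^m$, both of which converge for $|z|<1$ and $y\in(0,1)$. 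This gives a double series in powers of $yz$; collecting the coefficient of $(yz)^l$ via the Cauchy product yields $\sum_{l\ge 0}\Big(\sum_{m=0}^l c_{l-m}^{(n-1)}\frac{(1-b_n)_m}{m!}\Big)(yz)^l$.

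Next I would multiply by the remaining factor $y^{a_1+\dots+a_n-1}$ and integrate over $y\in(0,1)$ term by term (justified by absolute convergence for fixed $|z|<1$), using $\int_0^1 y^{a_1+\dots+a_n-1}y^l\,\dd y=\tfrac{1}{l+a_1+\dots+a_n}$. This produces
$$
\beta\Big(\begin{matrix}a_1,\dots,a_n\\b_1,\dots,b_n\end{matrix}\Big|z\Big)=\sum_{l\ge 0}\frac{1}{l+a_1+\dots+a_n}\Big(\sum_{m=0}^l c_{l-m}^{(n-1)}\frac{(1-b_n)_m}{m!}\Big)z^l.
$$
Comparing this with the definition \eqref{beta_Taylor_series} of $c_k^{(n)}$ and using uniqueness of Taylor coefficients (the function is analytic near $0$ by Proposition \ref{prop_beta_analytic}) gives \eqref{c_k_n_recursive_formula}, after relabeling $l\mapsto k$ and $m\mapsto l$. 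The base case $n=1$ is already recorded in \eqref{incomplete_beta_Taylor}, which is consistent with the convention $\B\Big(\begin{matrix}-\\-\end{matrix}\Big|z\Big)=1$, i.e. $c_0^{(0)}=1$ and $c_k^{(0)}=0$ for $k\ge 1$, fed into \eqref{c_k_n_recursive_formula}.

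The only genuine subtlety is the interchange of summation and integration: one must check that the double series $\sum_{j,m}c_j^{(n-1)}\frac{(1-b_n)_m}{m!}y^{j+m+a_1+\dots+a_n-1}z^{j+m}$ converges absolutely and uniformly for $y\in[0,1]$ once $|z|<\rho$ for any $\rho<1$. This follows because the radius of convergence of the $(n-1)$-level series is at least $1$ (again from Proposition \ref{prop_beta_analytic}, since $\beta$ is analytic in $\c\setminus[1,\infty)$, hence in $|z|<1$) and the binomial series for $(1-w)^{b_n-1}$ has radius of convergence $1$ in $w=yz$, with $|yz|\le|z|<1$; a dominated-convergence or Weierstrass $M$-test argument then legitimizes the term-by-term integration. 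Everything else is the routine Cauchy-product bookkeeping indicated above.
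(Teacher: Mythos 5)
Your proposal is correct and follows essentially the same route as the paper: substitute the binomial series for $(1-yz)^{b_n-1}$ and the level-$(n-1)$ Taylor expansion into the integral identity \eqref{beta_integral_identity}, form the Cauchy product, integrate term by term using $\int_0^1 y^{l+a_1+\dots+a_n-1}\,\dd y=\tfrac{1}{l+a_1+\dots+a_n}$, and compare coefficients. The only difference is that you explicitly justify the interchange of summation and integration, which the paper leaves implicit.
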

\begin{proof}
Using the Binomial series and \eqref{B_integral_identity} we obtain for $|z|<1$
\begin{align*}
\beta\Big( 
\begin{matrix}
a_1, \dots, a_n \\ b_1, \dots, b_n
\end{matrix} \Big  \vert z
\Big )&=\int_0^1 y^{a_1+\dots+a_n-1} (1-yz)^{b_n-1} \beta\Big( 
\begin{matrix}
a_1, \dots, a_{n-1} \\ b_1, \dots, b_{n-1}
\end{matrix} \Big  \vert yz
\Big ) \d y\\ \nonumber
& = \int_0^1 y^{a_1+\dots+a_n-1} \Big[ \sum\limits_{l\ge 0} \frac{(1-b_n)_l}{l!} (yz)^l \Big] 
\times \Big[  \sum\limits_{k\ge 0} c_k^{(n-1)} (yz)^{k} \Big]\d y\\
& = \int_0^1   y^{a_1+\dots+a_{n}-1} 
\sum\limits_{k\ge 0}  \Big[ \sum\limits_{l=0}^k c^{(n-1)}_{k-l} \times \frac{(1-b_n)_l}{l!} \Big] (yz)^{k} \d y\\
&=\sum\limits_{k\ge 0} \frac{z^k}{k+a_1+\dots+a_n} \Big[\sum\limits_{l=0}^k c^{(n-1)}_{k-l} \times \frac{(1-b_n)_l}{l!}\Big].
\end{align*}
\end{proof}

Now we present the algorithm for computing 
$\beta\Big( 
\begin{matrix}
a_1, \dots, a_n \\ b_1, \dots, b_n
\end{matrix} \Big  \vert z
\Big )$
for $z\in (0,1/2]$.  

\vspace{0.25cm} 
\noindent 
{\bf Step 1:} Fix a large integer $N$ and compute  $\{c_k^{(1)}\}_{0\le k \le N}$ via \eqref{incomplete_beta_Taylor}.

\vspace{0.25cm} 
\noindent 
{\bf Step 2:} For $m=2,3,\dots,n$, once we have the values of $\{c_k^{(m-1)}\}_{0\le k \le N}$ we compute the values of $\{c_k^{(m)}\}_{0\le k \le N}$ via 
\eqref{c_k_n_recursive_formula}.  This computation involves a convolution of two sequences, thus each step can be made more efficient with the help of the Fast Fourier Transform. 

\vspace{0.25cm} 
\noindent 
{\bf Step 3:}
Compute the value of 
$\beta\Big( 
\begin{matrix}
a_1, \dots, a_n \\ b_1, \dots, b_n
\end{matrix} \Big  \vert z
\Big )$
by truncating the series in \eqref{incomplete_beta_Taylor} after $N$ terms. 

\vspace{0.25cm} 

Several remarks are in order. First of all, we need to discuss the error due to truncation of series in \eqref{incomplete_beta_Taylor}.  
Since the function 
$\beta\Big( 
\begin{matrix}
a_1, \dots, a_n \\ b_1, \dots, b_n
\end{matrix} \Big  \vert z
\Big )$ is analytic in ${\mathbb C}\setminus [1,\infty)$ and is continuous as $z\to 1$, for every $n$ we have $c_k^{(n)} \to 0$ as $k \to +\infty$. This fact can be seen by indentifying these coefficients with the coefficients of Fourier series of a continuous function 
$$
t\mapsto \beta\Big( 
\begin{matrix}
a_1, \dots, a_n \\ b_1, \dots, b_n
\end{matrix} \Big  \vert e^{2\pi \i t}
\Big ).$$
 Thus after truncating the series in \eqref{incomplete_beta_Taylor} after $N$ terms we will have an error of $O(|z|^{N})$, and since $z\in (0,1/2]$ this error is less than $C2^{-N}$, for some constant $C$, which may depend on $a_i$ and $b_i$. Second, let us discuss the computational complexity of this algorithm. Using the Fast Fourier Transform, each Step 2 can be computed in $O(N \ln(N))$ arithmetic operations. 
Thus, the computational complexity of evaluating $\beta\Big( 
\begin{matrix}
a_1, \dots, a_n \\ b_1, \dots, b_n
\end{matrix} \Big  \vert z
\Big )$
is $O(n N \ln(N))$ arithmetic operations. 

The previous two comments show that the algorithm based on Taylor series is fast-convergent and very efficient. However,  there is one potential problem with this algorithm. When one or more of parameters $a_i$ (or $b_i$) is large, the coefficients 
$(1-a_i)_l$ (or $(1-b_i)_l$) that will appear in \eqref{c_k_n_recursive_formula} will also become very large and will have alternating sign, causing lot of cancellation in the sum in \eqref{c_k_n_recursive_formula} and resulting in loss of precision. Therefore, when some of the parameters $a_i$, $b_i$ are large one has to be mindful of this potential loss of precision. We resolved this problem by using a multi-precision arithmetic when computing the coefficients $c_k^{(n)}$ via \eqref{c_k_n_recursive_formula}.

\subsection{The Chebyshev expansion method}\label{subsection_computing_Chebyshev}

Let $\{\xi^{(n)}_k\}_{0\le k \le N}$ be the coefficients in the Chebyshev expansion 
\begin{equation}\label{expansion_beta_n}
\beta\Big( 
\begin{matrix}
a_1, \dots, a_{n} \\ b_1, \dots, b_{n}
\end{matrix} \Big  \vert z
\Big )=\frac{1}{2}\xi^{(n)}_0 + \sum\limits_{k\ge 1} \xi^{(n)}_k T_k(4z-1).
\end{equation}
Note that we use scaled Chebyshev polynomials $T_k(4z-1)$, since we are interested in computing the $\beta$-function only for $z\in (0,1/2]$ and the function $4z-1$ maps the interval $(0,1/2]$ onto $(-1,1)$. Below we present an algorithm that computes the coefficients $\{\xi^{(m)}_k\}_{0\le k \le N}$ using the previously computed values of 
$\{\xi^{(m-1)}_k\}_{0\le k \le N}$.

%

\vspace{0.25cm} 
\noindent 
{\bf Initial step:} Choose a large integer $N$ and set $m=1$. Let $\hat\xi^{(0)}_0=2$ and $\hat\xi^{(0)}_k=0$ for $1\le k\le N$.  

\vspace{0.25cm} 
\noindent 
{\bf Step 1:}
For $j=0,1,\dots,N$ compute (using the Fast Fourier Transform)
\begin{equation}\label{eqn_vj}
v_j=\frac{1}{2}\hat\xi^{(m-1)}_0 + \sum\limits_{k=1}^N \hat\xi^{(m-1)}_k \cos\Big(\frac{\pi k (j+1/2)}{N+1}\Big).
\end{equation}

\vspace{0.25cm} 
\noindent 
{\bf Step 2:}
For $k=0,1,\dots,N$ compute (using the Fast Fourier Transform)
\begin{equation}\label{eqn_etak}
\eta_k=\frac{2}{N+1}\sum\limits_{j=0}^N (1-(1+x_j)/4)^{b_m-1} v_j \cos\Big(\frac{\pi k (j+1/2)}{N+1}\Big),
\end{equation}
where 
$$
x_j:=\cos\Big(\frac{\pi (j+1/2)}{N+1}\Big). 
$$

\vspace{0.25cm} 
\noindent 
{\bf Step 3:}
We set $\mu_{N}=\mu_{N+1}=0$ and $A_m:=a_1+a_2+\dots+a_m$ and compute recursively
\begin{equation}\label{c_n-1_recusion}
\mu_{k-1}= \frac{1}{1+A_m/k}\Big(8\eta_k-2 \mu_k-\mu_{k+1} (1-A_m/k) \Big), \;\;\; k=N, N-1, \dots, 1,
\end{equation}

\vspace{0.25cm} 
\noindent 
{\bf Step 4:} We compute 
\begin{equation}\label{formula_g0}
\hat \xi^{(m)}_{0}=\frac{1}{A_m}\Big(\eta_0-\frac{1}{4}(\mu_0+\mu_1)\Big)
\end{equation}
and 
\begin{equation}\label{formula_g_n}
\hat \xi^{(m)}_{k}=\frac{\mu_{k-1}-\mu_{k+1}}{8k}, \;\;\; k=1,2,\dots,N.
\end{equation}

\vspace{0.25cm}

After running the above algorithm for $m=1,2,\dots,n$, we obtain the coefficients $\{\hat \xi^{(n)}_k\}_{0\le k \le N}$ (which also depend on $N$) that serve as approximations to the desired coefficients $\xi^{(n)}_k$. It is clear that the computational complexity of this algorithm is also $O(n N \ln(N))$ -- the same as for Taylor expansion method. The next theorem gives us a bound for the error of the approximation.

\begin{theorem}\label{thm_Chebyshev_bound}
There exists a constant $C$ (depending on $a_i$ and $b_i$) such that for all $N\ge 1$
\begin{equation}\label{eqn_xi_n_k_bound}
\max\limits_{0\le k \le N} |\hat\xi^{(n)}_{k}-\xi^{(n)}_{k}|<C 3^{-8N/5}.
\end{equation}
\end{theorem}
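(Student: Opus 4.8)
The plan is to track how the Chebyshev approximation error propagates through one iteration of the algorithm and then to compound it over the $n$ steps. The key observation is that the algorithm in Section~\ref{subsection_computing_Chebyshev} is, at the exact (non-discretized) level, nothing but the iteration \eqref{beta_integral_identity} rewritten in Chebyshev coefficient space: Step~1 and Step~2 realize multiplication by $(1-yz)^{b_m-1}$ together with the change of variables $x=yz$ (using the discrete cosine transform to move between coefficient space and values at Chebyshev nodes $x_j$), and Step~3--Step~4 invert the operator $f\mapsto \int_0^1 y^{A_m-1} f(yz)\,\dd y$ via a three-term recurrence on the Chebyshev coefficients. I would therefore introduce the exact Chebyshev coefficients $\xi^{(m)}_k$ of $\beta\big(\begin{smallmatrix} a_1,\dots,a_m\\ b_1,\dots,b_m\end{smallmatrix}\big\vert z\big)$ (for $z\in(0,1/2]$, in the variable $4z-1$) and write $\hat\xi^{(m)}_k-\xi^{(m)}_k = (\hat\xi^{(m)}_k-\tilde\xi^{(m)}_k) + (\tilde\xi^{(m)}_k-\xi^{(m)}_k)$, where $\tilde\xi^{(m)}$ denotes the coefficients one would obtain by running step $m$ of the algorithm on the \emph{exact} input $\xi^{(m-1)}$. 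The second difference is the one-step discretization (aliasing) error; the first difference is the propagated error, controlled by the operator norm of one algorithm step acting on coefficient sequences.

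First I would establish the decay of the exact coefficients. By Proposition~\ref{prop_beta_analytic} the map $z\mapsto \beta\big(\begin{smallmatrix} a_1,\dots,a_n\\ b_1,\dots,b_n\end{smallmatrix}\big\vert z\big)$ is analytic in $\mathbb{C}\setminus[1,\infty)$; after the affine substitution $z=(w+1)/4$ it is analytic in the $w$-plane minus $[3,\infty)$, which contains the Bernstein ellipse $E_\rho$ for every $\rho<3+2\sqrt{2}$ (the ellipse with foci $\pm1$ passing through $w=3$). The classical Bernstein bound then gives $|\xi^{(m)}_k|\le C\rho^{-k}$ for any such $\rho$; choosing $\rho$ slightly below $3+2\sqrt2$ and noting $3+2\sqrt2=(1+\sqrt2)^2$, one gets in particular $|\xi^{(m)}_k|\le C (1+\sqrt2)^{-2k}$, and more than enough room to absorb the exponent $8/5$ appearing in the statement. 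The same analyticity gives the standard aliasing/truncation estimate: the Clenshaw--Curtis (discrete cosine) quadrature and interpolation on $N+1$ nodes used in \eqref{eqn_vj}--\eqref{eqn_etak} reproduces the exact Chebyshev coefficients up to an error $O(\rho^{-2N})$ for functions analytic on $E_\rho$, because aliasing folds coefficient index $k$ onto $2(N+1)-k$, etc. This handles $\tilde\xi^{(m)}-\xi^{(m)}$, provided the intermediate functions (the product $(1-yz)^{b_m-1}\beta(\cdots\vert yz)$ and its rescaling) are themselves analytic on a fixed ellipse independent of $m$ — which they are, since $y\in[0,1]$, $z\in(0,1/2]$ keeps $yz$ in $[0,1/2]$ and the singularities of $(1-yz)^{b_m-1}$ and of $\beta(\cdots\vert yz)$ in the $w=4z-1$ variable stay on $[3,\infty)$ uniformly.

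Next I would bound the amplification factor of a single algorithm step on the error sequence. Steps~1--2 amount to pointwise multiplication by the bounded weight $(1-(1+x_j)/4)^{b_m-1}=((3-x_j)/4)^{b_m-1}$ followed by a forward and inverse DCT; the DCT pair is an orthogonal transformation up to the explicit normalization $2/(N+1)$, so this contributes a factor bounded independently of $N$ (a bound like $\max_j((3-x_j)/4)^{b_m-1}\le (3/4)^{\min(b_m-1,0)}\cdot 1^{\max(b_m-1,0)}$, say). The delicate part is Steps~3--4: the backward recurrence \eqref{c_n-1_recusion} defines $\mu_{k-1}$ from $\mu_k,\mu_{k+1}$ and $\eta_k$, and I must show this recurrence is \emph{stable}, i.e.\ the homogeneous solutions do not grow as $k$ decreases. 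The characteristic behavior of \eqref{c_n-1_recusion} for large $k$ is governed by $\lambda^2+2\lambda+1\approx(\lambda+1)^2$, a double root at $\lambda=-1$, so the homogeneous solutions grow only polynomially (like $k$) as $k\to 0$; combined with $\mu_N=\mu_{N+1}=0$ and the $1/(1+A_m/k)$ prefactor, one gets $\max_k|\mu_k|\le C\,N\max_k|\eta_k|$ where the $\eta_k$ inherit the rapid decay of $\xi^{(m)}$ (up to the discretization error). Feeding this into \eqref{formula_g_n} via $\hat\xi^{(m)}_k=(\mu_{k-1}-\mu_{k+1})/(8k)$ loses only a harmless polynomial-in-$N$ factor. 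Thus one step amplifies the input error by at most $C\cdot\mathrm{poly}(N)$ and injects a new discretization error of size $O(\rho^{-2N})$; iterating $n$ times gives $\max_k|\hat\xi^{(n)}_k-\xi^{(n)}_k|\le C_n\,\mathrm{poly}(N)\,\rho^{-2N}$, and since $\mathrm{poly}(N)\,\rho^{-2N}$ is eventually dominated by $3^{-8N/5}$ for $\rho$ close enough to $3+2\sqrt2$ (indeed $2\ln\rho > \tfrac85\ln 3$ already for modest $\rho$), the bound \eqref{eqn_xi_n_k_bound} follows after adjusting $C$.

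I expect the main obstacle to be the stability analysis of the backward recurrence \eqref{c_n-1_recusion} in Step~3: one must argue carefully that the double root at $-1$ produces only $O(k)$ growth rather than exponential growth, control the $k$-dependent coefficients $1\pm A_m/k$ and the prefactor $1/(1+A_m/k)$ uniformly (these misbehave near $k=0$, where Step~4 treats $\hat\xi^{(m)}_0$ separately via \eqref{formula_g0}, so the recurrence itself is only run down to $k=1$), and confirm that the mild polynomial amplification per step, compounded $n$ times, is still swamped by the geometric gain $\rho^{-2N}$. The rest — Bernstein's lemma for coefficient decay, the standard aliasing estimate for Clenshaw--Curtis, and the boundedness of the multiplication/DCT step — is routine once the uniform-in-$m$ analyticity domain has been pinned down.
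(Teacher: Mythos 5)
Your proposal follows essentially the same route as the paper: the same decomposition into propagated error plus per-step discretization error, the same Bernstein-ellipse argument giving $O(\rho^{-k})$ coefficient decay for $\rho<3+\sqrt{8}$, and the same key point that the backward recurrence \eqref{c_n-1_recusion} is stable because its homogeneous solutions grow only polynomially (the double root at $-1$), which the paper makes rigorous by conjugating the transfer matrices to upper-triangular form and bounding the product norm by $1+l$, yielding the $O(N\rho^{-N})=O(3^{-8N/5})$ bound. The one step you leave as a sketch --- controlling the $k$-dependent coefficients in that recurrence uniformly --- is exactly the step the paper's matrix argument supplies, so your outline is correct and matches the paper's proof.
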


\begin{remark}
The constant in \eqref{eqn_xi_n_k_bound} can be slightly improved: as will be clear from the proof of Theorem \ref{thm_Chebyshev_bound}, it is true that  
$$
\max\limits_{0\le k \le N}|\hat\xi^{(n)}_{k}-\xi^{(n)}_{k}|=O(\rho^{-N})
$$
for every $\rho \in (0,3+8\sqrt{8})$, where the constant in the big-O notation depends on $a_i$, $b_i$ and $\rho$. Note that $3^{8/5}=5.7995...$ and $3+\sqrt{8}=5.8284...$, thus the bound given in \eqref{eqn_xi_n_k_bound}, while suboptimal, should be good enough for all practical purposes. 
\end{remark}

The proof of Theorem \ref{thm_Chebyshev_bound} relies on the following 

\vspace{0.25cm}
{\bf Fact:}
{\it Assume the function $g(z)$ is analytic in $\c \setminus [1,\infty)$. Then the coefficients in the Chebyshev expansion 
$$
g(z)=w_0/2+\sum\limits_{k\ge 1} w_k T_k(4z-1), \;\;\; 0\le z \le 1/2,
$$
satisfy $w_k=O(\rho^{-k})$ for any $\rho \in (0,3+\sqrt{8})$. } 

\vspace{0.25cm}
The above result follows easily from the classical fact that the Chebyshev series expansion of a function $f$ converges in the largest Bernstein ellipse in which the function $f$ is analytic (see \cite{Mason_2022}[Theorem 5.16]). More precisely, the function $f((1+z)/4)$ is analytic in $\c \setminus [3,\infty)$, thus it is analytic inside the  Bernstein ellipse 
$$
E_{\rho}:=\{ z \in \c : \frac{\re(z)^2}{a_{\rho}^2}+\frac{\im(z)^2}{b_{\rho}^2}=1 \}
$$
with $a_{\rho}=(\rho+\rho^{-1})/2$ and $b_{\rho}=(\rho-\rho^{-1})/2$ with $\rho=3+\sqrt{8}$.

\vspace{0.25cm}
\noindent
{\bf Proof of Theorem \ref{thm_Chebyshev_bound}:}
The choice of $\hat \xi^{(0)}_k$ in the initial step corresponds to our convention that 
$$
\beta\Big( 
\begin{matrix}
- \\ -
\end{matrix} \Big | z \Big )\equiv 1=
\frac{1}{2}\hat\xi^{(0)}_0 + \sum\limits_{k\ge 1} \hat\xi^{(0)}_k T_k(4z-1),
$$
thus at this level there is no approximation error: $\xi^{(0)}_k=\hat \xi^{(0)}_k$ for all $k$. 

Let us now justify Steps 1-4. We denote 
$$
f(z):=\beta\Big( 
\begin{matrix}
a_1, \dots, a_{m-1} \\ b_1, \dots, b_{m-1}
\end{matrix} \Big  \vert z
\Big ), \;\;\;
F(z):=\beta\Big( 
\begin{matrix}
a_1, \dots, a_{m} \\ b_1, \dots, b_{m}
\end{matrix} \Big  \vert z
\Big ).
$$
According to \eqref{B_integral_identity} and \eqref{def_ beta}, 
these two functions are related by the equation
\begin{equation*}
z^{A_m} F(z)=\int_0^z x^{A_m-1}(1-x)^{b_m-1} f(x) dx, \;\;\; z\in (0,1),
\end{equation*}
where $A_m:=a_1+a_2+\dots+a_m$. 
We take derivative of both sides of the above equation and obtain a differential equation
\begin{equation}\label{diff_eqn_F_f}
A_m F(z)+z F'(z)=(1-z)^{b_m-1} f(z), \;\;\; z\in (0,1).
\end{equation}
Assume we know the coefficients $\theta_k$ in the Chebyshev expansion 
\begin{equation}\label{eqn_Cheb2}
(1-z)^{b_m-1} f(z)=\theta_0/2+\sum\limits_{k\ge 1} \theta_k T_k(4z-1), \;\;\; 0\le z \le 1/2. 
\end{equation}
Let us denote by $\nu_k$ the coefficients in the Chebyshev expansion 
\begin{equation}\label{eqn_Cheb3}
F'(z)=\nu_0/2+\sum\limits_{k\ge 1} \nu_k T_k(4z-1), \;\;\; 0\le z \le 1/2. 
\end{equation}
Chebyshev polynomials satisfy 
$$
\int T_n(x) dx = \frac{T_{n+1}(x)}{2(n+1)}-
\frac{T_{n-1}(x)}{2(n-1)}, 
$$
which can be established from formulas 8.941 and 8.949.1 in 
\cite{Jeffrey2007}. 
From here we obtain 
$$
\int T_k(4z-1) d z=\frac{1}{8} \Big(\frac{T_{k+1}(4z-1)}{k+1}-\frac{T_{k-1}(4z-1)}{k-1}\Big), \;\;\; k\ge 2. 
$$
We also check directly (using the facts $T_1(x)=x$ and $T_2(x)=2x^2-1$) that 
$$
\int 1 d z=T_1(4z-1)/4, \;\;\; \int T_1(4z-1) d z=T_2(4z-1)/16.
$$
Combining the above formulas, we obtain, after integrating both sides of \eqref{eqn_Cheb3} and rearranging the terms of the series, the following expansion 
\begin{equation}\label{expansion_F2}
F(z)=Q+\sum\limits_{k\ge 1} \frac{\nu_{k-1}-\nu_{k+1}}{8k} T_k(4z-1), \;\;\; 0\le z \le 1/2, 
\end{equation}
for some constant $Q$. 
Next we use formulas 
\begin{align*}
&z  = \frac{1}{4}(T_1(4z-1)+1),\\
&z T_k(4z-1)=\frac{1}{8}(T_{k+1}(4z-1)+2T_k(4z-1)+T_{k-1}(4z-1)), \;\;\; k\ge 1,
\end{align*}
and equations \eqref{diff_eqn_F_f}, \eqref{eqn_Cheb2}, \eqref{eqn_Cheb3} and \eqref{expansion_F2} and conclude that 
\begin{align*}
& A_m \Big[ Q +  \sum\limits_{k\ge 1} \frac{\nu_{k-1}-\nu_{k+1}}{8k} T_k(4z-1) \Big]
 + 
 \frac{\nu_0}{8} \Big(T_1(4z-1)+1\Big)\\&
 +\sum\limits_{k\ge 1}\frac{\nu_k}{8} \Big( T_{k+1}(4z-1) +2T_k(4z-1)+ T_{k-1}(4z-1) \Big) = \theta_0/2+\sum\limits_{k\ge 1} \theta_k T_k(4z-1).
\end{align*}
Comparing the coefficients in front of $T_k$ in the above equation gives us the following identities: 
\begin{align}
\label{eqn_xi1}
&2A_mQ+\frac{1}{4} (\nu_0+\nu_1)=\theta_0,\\
\label{eqn_xi2}
&A_m \frac{\nu_{k-1}-\nu_{k+1}}{8k}+\frac{1}{8} (\nu_{k-1}+2\nu_k+\nu_{k+1})=\theta_k, \;\;\; k\ge 1.   
\end{align}
The equation \eqref{eqn_xi2} can be written in an equivalent form 
\begin{equation}\label{eqn_xi3}
\nu_{k-1}=\frac{1}{1+\frac{A_m}{k}} \Big(8\theta_k-2\nu_k - \nu_{k+1}\Big(1-\frac{A_m}{k} \Big) \Big)
\end{equation}
Now we have a preliminary form of the algorithm for computing $\xi_k^{(m)}$: First we compute recursively $\nu_{k}$ via \eqref{eqn_xi3} 
and then evaluate 
\begin{align}
\xi_0^{(m)}&=2Q=\frac{1}{A_m} \Big( \theta_0-\frac{1}{4}(\nu_0+\nu_1) \Big), \\
\xi_k^{(m)}&=\frac{\nu_{k-1}-\nu_{k+1}}{8k}, \;\;\; 1\le k \le N.  
\end{align}
There are two problems that we need to overcome to make this work. First of all, when performing backward iteration in \eqref{eqn_xi3} we need to start with some values $\nu_N$ and $\nu_{N+1}$ in order to compute $\nu_k$ for $k=N-1,N-2,\dots,1,0$, however we do not know the values of $\nu_N$ and $\nu_{N+1}$. Second, we do not know the values of coefficients $\theta_k$ in the Chebyshev expansion \eqref{eqn_Cheb2}, thus we will need to approximate them and control the resulting error. Let us first consider the former problem. 

We define $\{\hat \nu_k\}_{0\le k <N}$ to be the values obtained by recursion \eqref{eqn_xi3} starting from $\hat \nu_{N}=\hat \nu_{N+1}=0$. Let us estimate the difference $y_k=\nu_k-\hat \nu_k$ for $0\le k <N$.  Note that $y_k$ satisfy the homogeneous recurrence equation
\begin{equation}\label{eqn_y}
y_{k-1}=\frac{1}{1+\frac{A_m}{k}} \Big(-2y_k - y_{k+1}\Big(1-\frac{A_m}{k} \Big) \Big), \;\;\; 0\le k <N 
\end{equation}
started with values $y_N=\nu_N$ and $y_{N+1}=\nu_{N+1}$. We denote 
${\bf u}_k=[y_{k-1},  y_k]^T \in {\r^2}$ and 
$$
Q_k:=\begin{bmatrix} 
-\frac{2}{1+A_m/k} & -\frac{1-A_m/k}{1+A_m/k}  \\
1 & 0
\end{bmatrix}.
$$
Then recursion \eqref{eqn_y} can be rewritten in vector-matrix form as follows
\begin{equation*}
{\bf u}_k = Q_k {\bf u}_{k+1}
\end{equation*}
and we find 
\begin{equation}\label{formula_y}
{\bf u}_k = Q_k Q_{k+1} \dots Q_{N} {\bf u}_{N+1}.
\end{equation}
Let us also define $R_k=V^{-1} Q_k V$, where 
$$ 
V:=\begin{bmatrix} 
-2 & 1  \\
2 & 0
\end{bmatrix}.
$$
One can check that 
$$
R_k=\begin{bmatrix} 
-1 & \frac{1}{2}  \\
0 & -\frac{1-A_m/k}{1+A_m/k}
\end{bmatrix},
$$
so that $R_k$ belongs to class ${\mathcal D}$ of matrices of the form
$$
\begin{bmatrix} 
-1 & \frac{1}{2}  \\
0 & d_{2,2}
\end{bmatrix}
$$
with $|d_{2,2}|<1$. It is easy to show by induction that a product of $l$ matrices from class ${\mathcal D}$ has the form
$$
\begin{bmatrix} 
(-1)^l & d_{1,2}  \\
0 & d_{2,2}
\end{bmatrix}
$$
where $|d_{1,2}|<l/2$ and $|d_{2,2}|<1$, thus the $\lVert \cdot \rVert_{\infty}$ norm of this matrix is less than $1+l$. 
From \eqref{formula_y} we conclude 
$$
{\bf u}_k = V R_k R_{k+1} \dots R_{N} V^{-1} {\bf u}_{N+1} 
$$
so that $\lVert {\bf u}_k \rVert_{\infty}<6(2+N-k)\lVert {\bf u}_{N+1} \rVert_{\infty}$ for $0\le k \le N$ (since $\lVert V \rVert_{\infty}=3$ and $\lVert V^{-1} \rVert_{\infty}=2$). Since $\nu_k$ are coefficients in the Chebyshev expansion 
\eqref{eqn_Cheb3} of the function $F'(z)$ 
that is analytic in $\c \setminus[-1,\infty)$, we know that 
$\nu_k=O(\rho^{-k})$ for any $\rho \in (0,3+\sqrt{8})$. Thus 
$\lVert {\bf u}_{N+1} \rVert_{\infty}=O(\rho^{-N})$ and 
for every $0\le k \le N$ we have $\lVert {\bf u}_k \rVert_{\infty}=O(N \rho^{-N})$ for any $\rho \in (0,3+\sqrt{8})$.
Since $3^{8/5}< (3+\sqrt{8})$, for $\rho \in (3^{8/5}, 3+\sqrt{8})$ we have 
$N \rho^{-N}=O(3^{-8N/5})$ and we conclude that  $y_k=\nu_k-\hat \nu_k=O(3^{-8N/5})$.

Thus we have shown that the backward recursion \eqref{eqn_xi3} is stable and if we start it from $\nu_N=\nu_{N+1}=0$ we will contribute an error of size $O(3^{-8N/5})$. 

Now we need to address the second issue, namely, that we do not know the coefficients $\theta_k$ in the Chebyshev expansion \eqref{eqn_Cheb2}. However, we can approximate $f$ by its truncated Chebyshev series, which will give us an approximation
$$
(1-z)^{b_m-1}f(z) \approx  g(z):=(1-z)^{b_m-1} \Big[
\frac{1}{2}\xi^{(m-1)}_0 + \sum\limits_{k=1}^{N} \xi^{(m-1)}_k T_k(4z-1) \Big], 
$$
and the difference $(1-z)^{b_m-1}f(z) - g(z)$ will be bounded by $O(\rho^{-N})$ for any $\rho \in (0,3+\sqrt{8})$, uniformly on $0\le z \le 1/2$ (this follows from the fact that the coefficients $\xi^{(m-1)}_k$ decay at rate $O(\rho^{-k})$ as $k\to +\infty$). 
The numbers $\eta_k$  computed in \eqref{eqn_etak} are precisely the Chebyshev coefficients of $g(z)$ on $0\le z \le 1/2$. Thus $|\eta_k-\theta_k|=O(\rho^{-N})$ for $0\le k \le N$ and 
any $\rho \in (0,3+\sqrt{8})$. Since we have already demonstrated that the backward iteration \eqref{eqn_xi3} is stable, we see that when we replace $\theta_k$ by $\eta_k$ we contribute an overall error of the size $O(3^{-8N/5})$. 
\qed

\section{Numerical examples}\label{section5}

In this section we present several examples that will illustrate the performance of the two algorithms for computing the generalized beta function 
$$
 \B\Big( 
\begin{matrix}
a_1, \dots , a_n \\ b_1, \dots , b_n
\end{matrix} \Big ). 
$$
We chose to compute this particular function because it is the normalizing constant for the ordered beta distribution 
in \eqref{def_beta_distr_simplex} and it is also used to compute various moments of the generalized beta distribution as in \eqref{Thm1_eqn6}. These two computations were instrumental in \cite{LLMN_2015} and it was the question of efficiently computing the values of the generalized beta function that motivated this research. However, as we explained in section \ref{section4}, the computational complexity of computing the generalized incomplete beta function is the same and the algorithms are very similar.

 For our first example we set $n=3$ and 
$$
[a_1,a_2,a_3]=[0.8, \;0.3,\; 1.5], \;\;\; [b_1, b_2, b_3]=[0.4,\; 1.7,\; 0.8]. 
$$
We compute the value of 
\begin{equation}\label{eq:B_exact_1}
 \B\Big( 
\begin{matrix}
a_1, a_2, a_3 \\ b_1, b_2 , b_3
\end{matrix} \Big )=0.4868940470437834231542713481277\dots
\end{equation}
using a Taylor series expansion approach described in section \ref{subsection_computing_Taylor}. The code was written in Fortran
and we used a multiprecision module MPFUN (written by David Bailey \cite{Bailey_2023}). The Taylor series were truncated at  $N=500$ and we did all computations with precision of 500 digits and then repeated it with the precision of 1000 digits -- this helped us to ensure that all digits displayed in \eqref{eq:B_exact_1} are correct.

Next we take the value in \eqref{eq:B_exact_1} as the ``exact" value and compute approximations using Taylor's method and Chebyshev method with with different values of $N$. Here we have implemented the code in Matlab in standard double precision. We define ${\mathcal E}^T(N)$ to be (the absolute value of) the difference between the exact value in \eqref{eq:B_exact_1} and the value computed using the  algorithm presented in section \ref{subsection_computing_Taylor} with $N$ terms in the Taylor series expansion. Similarly we define ${\mathcal E}^C(N)$ to be (the absolute value of) the difference between the exact value in \eqref{eq:B_exact_1} and the value computed using the  algorithm presented in section \ref{subsection_computing_Chebyshev} with $N$ terms in the Chebyshev series expansion. On Figure \ref{fig1_a} we present the plots of the two errors versus $N$. We see that both errors converge to zero exponentially fast and that the Chebyshev approximation converges faster. This confirms our theoretical results in section
\ref{section4}.

\begin{figure}[h!]
\centering
\subfloat[][Parameter set 1]{\label{fig1_a}\includegraphics[height =6.5cm]{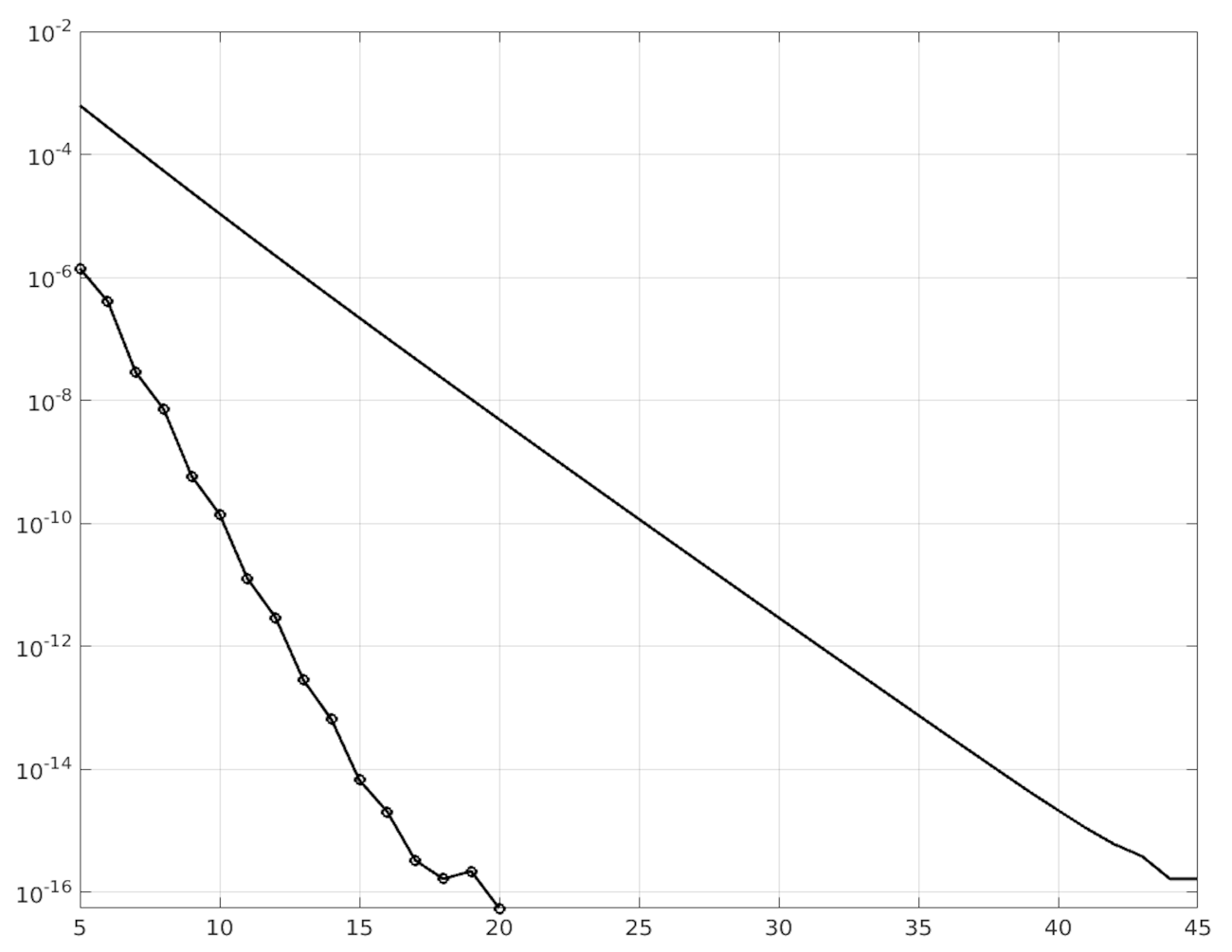}} 
\subfloat[][Parameter set 2]{\label{fig1_b}\includegraphics[height =6.5cm]{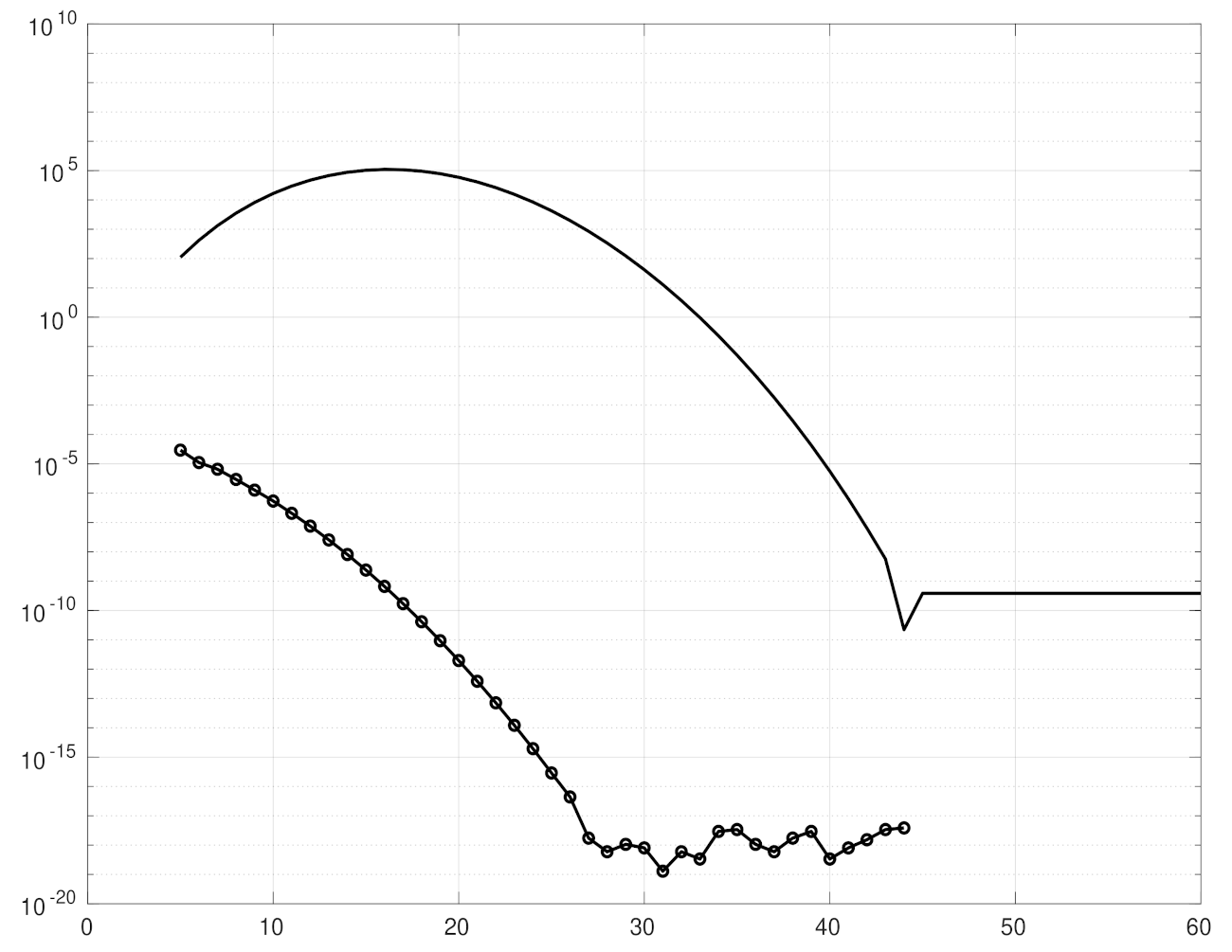}}
\caption{The errors of the approximation ${\mathcal E}^{{ \textnormal{T}}}(N)$ (solid line) and ${\mathcal E}^{{ \textnormal{C}}}(N)$ (line with circles) versus $N$ on the $x$-axis.} 
\label{fig1_error}
\end{figure}

For our second example we consider parameters 
$$
[a_1,a_2,a_3]=[50.8, \;0.3,\; 1.5], \;\;\; [b_1, b_2, b_3]=[0.4,\; 1.7,\; 0.8]. 
$$
Note that we have changed the value of $a_1$ to a relatively large value of $50.8$, while keeping all other parameters the same as in the first example. Again we compute the ``exact" value 
$$
 \B\Big( 
\begin{matrix}
a_1, a_2, a_3 \\ b_1, b_2 , b_3
\end{matrix} \Big )=
10^{-6} \times 9.9752436394601281551585749018468\dots  
$$
using Fortran code with a multiprecision module. On figure \ref{fig1_b} we show the convergence rate of Taylor series and Chebyshev approximations.  We see that the Chebyshev method still converges exponentially fast and is stable, whereas the method based on Taylor expansion struggles in this example: the error first increases to a large value of around $10^6$ and only then decreases. Even with large value of $N=200$ Taylor series method gives us a value of $10^{-6}\times 9.975628\dots$, thus capturing only four correct digits. This confirms our observation at the end of section \ref{subsection_computing_Taylor}, that the Taylor series algorithm is not appropriate in the situation when some of the parameters $a_i$, $b_i$ are large, as there is loss of precision arising due to subtraction of large numbers in 
\eqref{c_k_n_recursive_formula}. In this case one should use either the Chebyshev method of Taylor series method with high precision.

Finally we consider an example with large number of parameters. 
Here we consider an example with 100 parameters and we set 
$$
a_i=\frac{2i-1}{200}, \;\;\; b_i=1-a(i),
$$
for $1\le i \le 100$. 
Again we compute the ``exact" value in Fortran using precision of 500 digits and 500 terms of Taylor series:  
$$ 
\B\Big( 
\begin{matrix}
a_1, a_2, \dots, a_{100} \\ b_1, b_2, \dots, b_{100}
\end{matrix} \Big )=10^{-33} \times  4.2217553528914884124401921234246\dots
$$
Then we try to compute the same value using our Matlab code. The method based on Taylor expansion with $N=50$ terms gives us  
$$
10^{-33} \times 4.221755352891131\dots
$$
and the method based on Chebyshev expansion with $N=20$ terms gives us a comparably accurate result
$$
10^{-33} \times  4.221755352891663\dots. 
$$
We see that both methods work well in this case. 

We provide both Matlab programs (Taylor and Chebyshev methods) for computing the generalized beta function  
$
\B\Big( 
\begin{matrix}
a_1, a_2, \dots, a_{n} \\ b_1, b_2, \dots, b_{n}
\end{matrix} \Big )
$
on \url{https://kuznetsov.mathstats.yorku.ca/code/}.

\section*{Acknowledgements}
Research of A.K. was supported by the Natural Sciences and Engineering Research Council of Canada. 

%


\end{document}